\newtheorem{theorem}{Theorem}
\newtheorem{proposition}{Proposition}
\newcommand{\displayproblem}[3]
{%
\begin{trivlist}%
\item
\begin{list}{}{\setlength{\leftmargin}{5.5em}\setlength{\rightmargin}{1em}\setlength{\labelwidth}{\leftmargin}}%
\item[\bf Problem:] {\sc #1}
\item[\bf Instance:] #2
\item[\bf Question:] #3
\end{list}%
\end{trivlist}%
}
\begin{document}
\title{On the Hardness of Entropy Minimization and Related Problems}
\author{\IEEEauthorblockN{Mladen Kova\v cevi\' c, Ivan Stanojevi\' c, and Vojin \v Senk}
        \IEEEauthorblockA{Department of Electrical Engineering, Faculty of Technical Sciences, \\
                          University of Novi Sad, 21000 Novi Sad, Serbia \\
                          Emails: \{kmladen, cet\_ivan, vojin\_senk\}@uns.ac.rs }}
\maketitle
\begin{abstract}
  We investigate certain optimization problems for Shannon information 
  measures, namely, minimization of joint and conditional entropies 
  $H(X,Y)$, $H(X|Y)$, $H(Y|X)$, and maxi\-mization of mutual information 
  $I(X;Y)$, over convex regions. When restricted to the so-called transportation 
  polytopes (sets of distributions with fixed marginals), very simple proofs 
  of NP-hardness are obtained for these problems because in that case they 
  are all equivalent, and their connection to the well-known \textsc{Subset sum} 
  and \textsc{Partition} problems is revealed. The computational intractability 
  of the more general problems over arbitrary polytopes is then a simple 
  consequence. Further, a simple class of polytopes is shown over which 
  the above problems are not equivalent and their complexity differs sharply, 
  namely, minimization of $H(X,Y)$ and $H(Y|X)$ is trivial, while minimization 
  of $H(X|Y)$ and maximization of $I(X;Y)$ are strongly NP-hard problems. 
  Finally, two new (pseudo)metrics on the space of discrete probability 
  distributions are introduced, based on the so-called variation of 
  information quantity, and NP-hardness of their computation is shown. 
\end{abstract}
\begin{IEEEkeywords}
  Entropy minimization, maximization of mutual information, NP-complete, 
  NP-hard, subset sum, partition, number partitioning, transportation polytope, 
  pseudometric, variation of information. 
\end{IEEEkeywords}
%
%
\section{Introduction}
Joint entropy $H(X,Y)$, conditional entropies $H(X|Y)$, $H(Y|X)$, and mutual 
information $I(X;Y)$, 
are some of the founding concepts of information theory. In the present paper 
we investigate some natural optimization problems associated with these 
functionals, namely, minimization of joint and conditional entropies and 
maximization of mutual information over convex polytopes, and show that all 
these problems are NP-hard. Certain special cases of these problems are found 
to represent information theoretic analogues of the well-known \textsc{Subset sum} 
and \textsc{Partition} problems. Our results will thus provide a simple, yet 
interesting connection between complexity theory and information theory. 
\par Various optimization problems for the above mentioned information measures 
are studied in the literature. An important example is the well-known 
\emph{Maximum entropy principle} formulated by Jaynes \cite{jaynes}, which states 
that, among all proba\-bility distributions satisfying certain constraints 
(expressing our knowledge about the system), one should pick the one with maximum 
entropy. It has been recognized by Jaynes, as well as many other researchers, that 
this choice gives the least biased, the most objective distribution consistent 
with the information one possesses about a system. Maximizing entropy under 
constraints is therefore an important problem, and it has been thoroughly 
studied (see, e.g., \cite{harr,kapur1}). 
\par It has also been argued \cite{kapur2,yuan} that minimum entropy distributions 
can be of as much interest as maximum entropy distributions. 
The MinMax information measure, for example, has been introduced in \cite{kapur2} 
as a measure of the amount of information contained in a given set of constraints, 
and it is based both on maximum and 
minimum entropy distributions. More generally, entropy minimization is also very 
important conceptually. Watanabe \cite{pattern} has shown that many algorithms 
for clustering and pattern recognition can be characterized as suitably defined 
entropy minimization problems. 
\par Since entropy is a concave\footnote{\,To avoid possible confusion, concave 
means $\cap$ and convex means $\cup$.} functional, its maximization can be solved 
by standard concave maximization methods. On the other hand, concave minimization 
is in general a much harder problem \cite{onn}. Indeed, we will show that the 
minimization of joint entropy over convex polytopes is NP-hard. In fact, we will 
show that a much more restrictive problem is NP-hard, that of entropy minimization 
over the so-called transportation polytopes, i.e., entropy minimization under constraints 
on the marginal distributions. 
Restricting the problem to transportation polytopes is perhaps the key step in our 
analysis and has several advantages. First, it enables one to obtain a very simple 
proof of NP-hardness by using a reduction from the \textsc{Subset sum} problem and 
some simple information theoretic identities and inequalities. Second, it will 
immediately follow from this proof that the problems of minimization of conditional 
entropies and maximization of mutual information are also NP-hard. This claim 
looks difficult to prove by some other methods because these functionals are neither 
concave nor convex. 
\par Maximization of mutual information is certainly an important problem, studied 
in many different scenarios. A familiar example is computing the capacity of the 
channel which amounts to the maximization of this functional over all input 
distributions. This is again a convex maximization problem for which efficient 
algorithms exist \cite{cover}. In Section \ref{onemarginal} we will show that the 
reverse problem -- maximizing mutual information over conditional distributions, 
given the input distribution -- is NP-hard. Another important example is the so-called 
\emph{Maximum mutual information (MMI) criterion} used in the design of classifiers. 
See, e.g., \cite{battiti,deng} for two important applications of this principle. 
\section{Basic definitions}
This section reviews the definitions and basic properties of the quantities that 
will be used later. 
\subsection{Shannon information measures}
Shannon entropy of a random variable $X$ with probability distribution $P=(p_i)$ 
is defined as: 
\begin{equation}
  H(X) \equiv H(P) = -\sum_{i} p_i\log p_i 
\end{equation}
with the usual convention $0\log0=0$ being understood. For a pair of random 
variables $(X,Y)$ with joint distribution $S=(s_{i,j})$ and marginal 
distributions $P=(p_i)$ and $Q=(q_j)$, the following defines their joint entropy: 
\begin{equation}
  H(X,Y) \equiv H_{X,Y}(S) = -\sum_{i,j} s_{i,j}\log s_{i,j}, 
\end{equation}
conditional entropy: 
\begin{equation}
  H(X|Y) \equiv H_{X|Y}(S) = -\sum_{i,j} s_{i,j}\log \frac{s_{i,j}}{q_j}, 
\end{equation}
and mutual information: 
\begin{equation}
  I(X;Y) \equiv I_{X;Y}(S) = \sum_{i,j} s_{i,j}\log \frac{s_{i,j}}{p_{i}q_{j}}, 
\end{equation}
again with appropriate conventions. All of these quantities are related by 
simple identities: 
\begin{equation}
\label{identity}
 \begin{aligned}
   H(X,Y) &= H(X) + H(Y) - I(X;Y)  \\ 
          &= H(X) + H(Y|X) 
 \end{aligned}
\end{equation}
and obey the following inequalities: 
\begin{align}
\label{ineqH}
   \max\big\{H(X), H(Y)\big\}\leq H(X&,Y) \leq H(X) + H(Y), \\
\label{ineqI}
  \min\big\{H(X), H(Y)\big\} &\geq I(X;Y) \geq 0, \\
\label{ineqHx}
  0 \leq H(X|Y) &\leq H(X). 
\end{align}
Equalities on the right-hand sides of \eqref{ineqH}--\eqref{ineqHx} are 
achieved if and only if $X$ and $Y$ are independent. Equalities on the 
left-hand sides of \eqref{ineqH} and \eqref{ineqI} are achieved if and 
only if $X$ deterministically depends on $Y$, or vice versa. Another way 
to put this is that their joint distribution (written as a matrix) has at 
most one nonzero entry in every row, or in every column. Equality on the 
left-hand side of \eqref{ineqHx} holds if and only if $X$ deterministically 
depends on $Y$. We will use these properties in our proofs. For their 
demonstration we point the reader to the standard reference \cite{cover}. 
\par From identities \eqref{identity} one can make the following simple, 
but crucial, observation: Over a set of two-dimensional probability 
distributions with fixed marginals (and hence fixed marginal entropies 
$H(X)$ and $H(Y)$), all the above functionals differ up to an additive 
constant (and a minus sign in the case of mutual information). This means 
in particular that the minimization of joint entropy over such domains is 
equivalent to the minimization of either one of the conditional entropies, 
or to the maximization of mutual information. Therefore, NP-hardness of 
any of these problems will imply that all of them are NP-hard. And finally, 
this will imply that more gene\-ral problems of minimization/maximization 
of the corresponding functionals over arbitrary convex polytopes are NP-hard. 
\subsection{Transportation polytopes}
\label{transport}
Let $\Gamma_{n}^{(1)}$ and $\Gamma_{n\times m}^{(2)}$ denote the sets of one- and 
two-dimensional probability distributions with alphabets of size $n$ and $n\times m$, 
respectively: 
\begin{align}
  \Gamma_{n}^{(1)} &= \big\{(p_i)\in\mathbb{R}^{n}\,:\,p_i\geq0\,,\,\sum_i p_i = 1\big\}  \\ 
  \Gamma_{n\times m}^{(2)} &= \big\{(p_{i,j})\in\mathbb{R}^{n\times m}\,:\,p_{i,j}\geq 0\,,\,\sum_{i,j} p_{i,j} = 1\big\}
\end{align}
Now consider the set of all distributions with marginals $P\in\Gamma_{n}^{(1)}$ 
and $Q\in\Gamma_{m}^{(1)}$, denoted $\mathcal{C}(P,Q)$: 
\begin{equation}
  \mathcal{C}(P,Q) = \left\{S\in\Gamma_{n\times m}^{(2)}\,:\,\sum_j s_{i,j}=p_i,\, \sum_i s_{i,j}=q_j\right\} 
\end{equation}
(letter $\mathcal{C}$ stands for coupling). It is easy to show that sets 
$\mathcal{C}(P,Q)$ are convex and closed in $\Gamma_{n\times m}^{(2)}$. They are 
also clearly disjoint and cover entire $\Gamma_{n\times m}^{(2)}$, i.e., they 
form a partition of $\Gamma_{n\times m}^{(2)}$. Finally, they are parallel affine 
$(n-1)(m-1)$-dimensional subspaces of the $(n\cdot m-1)$-dimensional space 
$\Gamma_{n\times m}^{(2)}$. (We of course have in mind the restriction of the 
corresponding affine spaces in $\mathbb{R}^{n\times m}$ to $\mathbb{R}_{+}^{n\times m}$. ) 
\par The set of distributions with fixed marginals is basically the set of 
nonnegative matrices with prescribed row and column sums (only now we require 
the total sum to be one, but this is inessential). Such sets are known in 
discrete mathematics as transportation polytopes \cite{brualdi}. Their 
name comes from the fact that they correspond to the following problem: given 
$n$ supplies $p_1,\ldots,p_n$ and $m$ demands $q_1,\ldots,q_m$ of some "goods" 
(total supply and total demand being equal), describe all ways of transporting 
the goods so that the demands are fulfilled. For example, one possible solution 
to the transportation problem with $P=(1,3,5)$ and $Q=(2,4,3)$ would be 
\begin{equation*}
  \begin{pmatrix}
       1 & 0 & 0  \\ 
       1 & 2 & 0  \\ 
       0 & 2 & 3 
 \end{pmatrix}
\end{equation*}
which is obtained by the so-called north-west corner rule. The set of all possible 
solutions constitutes a polytope ${\mathcal C}(P,Q)$. 
\par In the context of the problems studied here, the following facts will be useful. 
A concave continuous function over any bounded convex polytope must attain its 
minimum over one of the vertices of the polytope. An interesting fact about 
transportation polytopes is that, if the marginals are integer, that is if 
$\,p_i,q_j\in\mathbb{N}$, then all the vertices (observed as matrices) have only 
integer entries. As a consequence, if the marginals are rational, 
that is if $\,p_i,q_j\in\mathbb{Q}$, then all the vertices have only rational entries. 
Furthermore, it is not hard to see that the description length\footnote{\,By the 
description length of an object, we mean the number of bits required to write it 
down, as usual in the context of algorithmic problems.} of the vertices is 
polynomial in the description length of the marginals. We shall make use of these 
facts in the proofs of Theorems \ref{minentropy} and \ref{channel}. 
\section{Entropy over transportation polytopes}
\label{entropytp}
As we noted before, we will focus here on sets of probability distributions with 
fixed marginals, i.e., we will consider the above mentioned optimization problems 
over transportation polytopes. The problems turn out to be NP-hard even under this 
restriction, and this is perhaps the easiest way to prove NP-hardness of their more 
general versions. 
\par Let some marginal distributions $P$ and $Q$ be given, and observe 
${\mathcal C} (P,Q)$. From identities \eqref{identity} one sees that over 
${\mathcal C} (P,Q)$ the minimization of $H(X,Y)$ is equivalent to the minimization 
of $H(X|Y)$ and $H(Y|X)$, or to the maximization of $I(X;Y)$, so it is enough 
to consider only the joint entropy for example. Joint entropy $H(X,Y)$ is well-known 
to be concave in the joint distribution, and so its minimization belongs to a 
wide class of concave minimization problems which are in general intractable 
\cite{onn}. Conditional entropies $H(X|Y)$ and $H(Y|X)$ are neither concave nor 
convex in the joint distribution, but over ${\mathcal C} (P,Q)$ they \emph{are} 
concave because in that case they differ from the joint entropy only by an 
additive constant. By the same reasoning, mutual information is convex in the 
joint distribution over ${\mathcal C} (P,Q)$. Based on concavity one concludes 
that the optimizing distribution for these problems must be one of the vertices 
of ${\mathcal C} (P,Q)$. The trouble with concave functions, of course, is that 
one must visit all of them, or at least a "large" portion of them, to decide 
where the minimum is. 
\subsection{The computational problems}
The most general form of the problem in the context studied here would be the 
following: Given a polytope (by a system of inequalities, say) in 
$\Gamma_{n\times m}^{(2)}$, find the distribution (matrix) $S$ which minimizes 
the entropy functional $H_{X,Y}(S)$. The decision version of this problem is 
obtained by giving some threshold $h$ at the input, and asking whether a given 
polytope contains a distribution $S$ with $H_{X,Y}(S)\leq h$. 
\par Let us now restrict the problem to transportation polytopes. Since: 
\begin{equation}
  H(X,Y) \geq \max\big\{H(X),H(Y)\big\}, 
\end{equation}
over the transportation polytope ${\mathcal C} (P,Q)$ we have: 
\begin{equation}
\label{lower}
  H(X,Y) \geq \max\big\{H(P),H(Q)\big\} 
\end{equation}
with equality if and only if $Y$ determi\-nistically depends on $X$, or vice 
versa \cite{cover}. In other words, we will have equality in \eqref{lower} 
iff the joint distribution is such that it has at most one nonzero entry in 
every row, or in every column. We can now formulate an even more restrictive 
problem, with the threshold specified in advance: Given a transportation 
polytope ${\mathcal C} (P,Q)\in\Gamma_{n\times m}^{(2)}$, is there a distribution 
$S$ in this polytope with $H_{X,Y}(S)\leq H(P)$? Note that, because of 
\eqref{lower}, this inequality must in fact be an equality. We name this problem 
\textsc{Entropy minimization} even though the name would probably be more 
appropriate for the more general problems mentioned above. 
%
%
\displayproblem
  {Entropy minimization}
  {Positive rational numbers $p_1,\ldots,p_n$ and $q_1,\ldots,q_m$, 
   with $\sum_{i=1}^{n} p_i = \sum_{j=1}^{m} q_j = 1$.}
  {Is there a matrix $S\in\mathcal{C}(P,Q)$ with entropy
   $H_{X,Y}(S) = H(P)$?}
%
\noindent This problem will be shown to be NP-complete. 
\par We also briefly note here that the corresponding problem of the maximization 
of $H(X,Y)$ (maximization of $H(X|Y)$ or minimization of $I(X;Y)$) over 
${\mathcal C} (P,Q)$ is trivial because: 
\begin{equation}
  H(X,Y) \leq H(X) + H(Y) 
\end{equation}
with equality if and only if $X$ and $Y$ are independent \cite{cover}, i.e., 
iff their joint distribution is $P\times Q=(p_iq_j)$, and this distribution 
clearly belongs to ${\mathcal C} (P,Q)$. 
\subsection{The proof of NP-hardness}
We describe first the \textsc{Subset sum}, a well-known NP-complete problem 
\cite{garey}, which will be the basis of the proof to follow: 
 \displayproblem
   {Subset sum}
   {Positive integers $d_1,\ldots,d_n$ and $s$.}
   {Is there a $J\subseteq\{1,\ldots,n\}$ such that $\sum_{j\in J} d_j = s$ ?}
\begin{theorem}
\label{minentropy}
  \textsc{Entropy minimization} is NP-complete. 
\end{theorem}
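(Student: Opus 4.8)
The plan is to prove membership in NP and then NP-hardness by a reduction from \textsc{Subset sum}. The crux of both parts is a purely combinatorial characterization of the \emph{yes}-instances. Using the identity $H(X,Y)=H(X)+H(Y|X)$ from \eqref{identity}, a matrix $S\in\mathcal{C}(P,Q)$ satisfies $H_{X,Y}(S)=H(P)=H(X)$ if and only if $H(Y|X)=0$; by the equality condition for the left-hand side of \eqref{ineqHx}, this holds exactly when $Y$ is a deterministic function of $X$, i.e.\ when each row of $S$ has a single nonzero entry. Since the row sums $p_i$ are strictly positive, that entry in row $i$ must equal $p_i$. Thus the \emph{yes}-instances are precisely those for which the masses $p_1,\ldots,p_n$ can be assigned to the $m$ columns so that the mass placed in column $j$ totals $q_j$; this is an exact set-partition condition on the $p_i$ with prescribed block sums $q_j$.

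First I would establish membership in NP. The certificate is such a matrix $S$. Verification does not require evaluating any entropy: one only checks that $S$ has its full mass $p_i$ in a single column of each row $i$ and that the resulting column sums equal the $q_j$. This avoids the difficulty that $H_{X,Y}(S)$ and $H(P)$ are in general transcendental and cannot be compared by direct computation. The certificate has polynomial description length, since its nonzero entries are just the rationals $p_i$ given in the input; equivalently, it is an integral vertex of the transportation polytope scaled by the common denominator, whose size is polynomial in the input by the facts recorded in Section~\ref{transport}.

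For NP-hardness I would reduce from \textsc{Subset sum}. Given positive integers $d_1,\ldots,d_n$ and $s$ with $0<s<\sum_i d_i$ (the remaining cases being trivial), set $D=\sum_i d_i$ and build the instance with $n$ rows and $m=2$ columns, row marginals $p_i=d_i/D$, and column marginals $q_1=s/D$, $q_2=(D-s)/D$. This is clearly computable in polynomial time and yields valid probability vectors. By the characterization above, a feasible $S$ with $H_{X,Y}(S)=H(P)$ exists if and only if the index set $J$ of rows assigned to column $1$ satisfies $\sum_{i\in J}p_i=q_1$, i.e.\ $\sum_{i\in J}d_i=s$. Hence the constructed instance is a \emph{yes}-instance precisely when the original \textsc{Subset sum} instance is, completing the reduction.

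The main obstacle is not the reduction itself, which is short, but getting the NP-membership argument right: because entropy is transcendental, one cannot naively verify $H_{X,Y}(S)=H(P)$ numerically. The combinatorial characterization of equality in \eqref{lower} is what removes this difficulty and must be stated carefully. A secondary point worth checking is that the threshold $H(P)$ rather than $\max\{H(P),H(Q)\}$ causes no trouble: whenever a deterministic $S$ exists we automatically have $H(P)=H(X,Y)\ge H(Q)$ by \eqref{lower}, so the apparent asymmetry between the two marginals is harmless.
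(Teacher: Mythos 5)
Your proof is correct and follows essentially the same route as the paper: a reduction from \textsc{Subset sum} via $p_i=d_i/D$, $Q=(s/D,\,1-s/D)$, using the characterization that $H_{X,Y}(S)=H(P)$ holds exactly when each row of $S$ has a single nonzero entry, and establishing NP membership with the matrix itself as a combinatorially verifiable certificate that avoids evaluating any (transcendental) entropies. Your treatment is in fact slightly more careful than the paper's, e.g.\ in excluding the degenerate cases $s\in\{0,\sum_i d_i\}$ so that $Q$ has strictly positive entries, and in noting why the asymmetric threshold $H(P)$ is harmless.
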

\begin{proof}
  We shall demonstrate a reduction from the \textsc{Subset sum} problem to the 
  \textsc{Entropy minimization} problem. Let there be given an instance of the 
  \textsc{Subset sum} problem, i.e., a set of positive integers 
  $s\,,\,d_1,\ldots,d_n$, $n\geq 2$. 
  Let $D=\sum_{i=1}^{n} d_i$, and let $p_i=d_i/D$, $q=s/D$. The question we are 
  trying to answer is whether there is a $J\subseteq\{1,\ldots,n\}$ such that 
  $\sum_{j\in J} d_j = s$. Observe that this is equivalent to asking whether 
  there is a matrix $S$ with row sums $P=(p_1,\ldots,p_n)$ and column sums 
  $Q=(q,1-q)$, which has at most one nonzero entry in every row (or, in 
  probabilistic language, such that $Y$ deterministically depends on $X$). 
  We know that in this case, and only in this case, the entropy of $S$ would 
  be equal to $H(P)$ \cite{cover}. 
  So if we create an instance of the \textsc{Entropy minimization} problem with 
  $P$ and $Q$ as above, the answer to the question whether there exists 
  $S\in\mathcal{C}(P,Q)$ with $H_{X,Y}(S) = H(P)$ will solve the 
  \textsc{Subset sum} problem. Therefore, this is the reduction we wanted. 
  It is left to prove that \textsc{Entropy minimization} belongs to NP. 
  This is done by using the familiar characterization of the class NP 
  via certificates \cite{pap}. We have to show that every \textsc{Yes}-instance 
  of the problem has a succinct certificate, while no \textsc{No}-instance has  
  one, and that the validity of the alleged certificates can be verified in 
  polynomial time. The certificate is of course the optimizing distribution itself. 
  That it is succinct is easy to show (see the comment in the last paragraph of 
  Section \ref{transport}), and polynomial time verifiability is even easier, 
  because we only have to check that $S$ belongs to $\mathcal{C}(P,Q)$ and that 
  it has at most one nonzero entry in every row. 
\end{proof}
\par As a straightforward consequence of the above claim, the more general problem 
of finding a minimizing distribution is NP-hard. It is an interesting task to 
determine the precise complexity of this problem (in the sense of proving that 
it is complete for some natural complexity class). Note that even determining 
whether it belongs to FNP\footnote{\,The class FNP captures the complexity of 
function problems associated with decision problems in NP, see \cite{pap}.} 
is nontrivial. Whether the 
decision version of this problem, namely, deciding whether a given polytope 
contains a distribution with entropy smaller than a given threshold, belongs to 
NP is also an interesting question (which we shall not be able to resolve here). 
One has to be careful when reasoning about "certificates" for these problems. 
Namely, one has to able to check in polynomial time that the certificate is 
indeed valid. In the above proof, we only had to check that the given matrix 
(the alleged certificate) belongs to $\mathcal{C}(P,Q)$ (i.e., that it has 
nonnegative entries and prescribed row and column sums) and that it has at most 
one nonzero entry in every row, and this is clearly easy to do. 
But in the more general problems mentioned above, one is required to compute 
numbers of the form $a\log a$ to check whether $H(T) \leq h$ for example. These 
numbers are in general irrational, and therefore verifying this inequality might 
not be computationally trivial as it might seem. It is interesting to mention in 
this context the so-called \textsc{Sqrt sum} problem: 
 \displayproblem
   {Sqrt sum}
   {Positive integers $d_1,\ldots,d_n$, and $k$.}
   {Decide whether $\sum_{i=1}^n \sqrt{d_i} \leq k$ ?}
This problem, though "conceptually simple" and bearing certain resemblance with 
checking of certificates in the general versions of the entropy minimization 
problem, is not known to be solvable in NP \cite{etessami} (it is solvable in PSPACE). 
\section{Two pseudometrics which are hard to compute}
A variant of the minimization of the above mentioned quantities produces a distance 
on the space of discrete proba\-bility distributions. For a pair of random variables 
$(X,Y)$ with joint distribution $S$, define \cite{csiszar}:
\begin{equation} 
 \begin{split}
   \Delta(X,Y) &\equiv \Delta(S)\! \begin{aligned}[t] &= H(X,Y) - I(X;Y)  \\
                                                    &= H(X|Y) + H(Y|X)
                                 \end{aligned}  \\ 
   \Delta'(X,Y) &\equiv \Delta'(S) = 1 - \frac{I(X;Y)}{H(X,Y)}
 \end{split} 
\end{equation}
The quantity $\Delta(X,Y)$ is sometimes called the \emph{variation of information}. 
Its normalized variant, $\Delta'(X,Y)$, is basically an information theoretic analogue 
of the Jaccard distance between finite sets. 
Both of these quantities satisfy the properties of a 
pseudometric \cite{csiszar}. However, when this statement is made, one must assume 
that the joint distribution of $(X,Y)$ is given because joint entropy and mutual 
information are not defined otherwise. This is usually overlooked in the literature. 
Furthermore, if these quantities are used as distance measures on the space of all 
random variables, then joint distributions of every pair of random variables must 
be given. For example, one could first define some random process $(X_t)$ and then 
take $\Delta$ or $\Delta'$ as distances between the random variables $X_t$.  
In order to avoid the dependence on the chosen random process (or on some universal 
joint distribution), and to define a distance between individual random variables 
(more precisely, between their distributions) one can make the following definitions: 
\begin{equation}
 \begin{aligned}
       \underline{\Delta}(P,Q) &= \inf_{S\in{\mathcal{C}}(P,Q)} \big\{H_{X,Y}(S) - I_{X;Y}(S)\big\}, \\ 
       \underline{\Delta}'(P,Q) &= \inf_{S\in{\mathcal{C}}(P,Q)} \left\{1 - \frac{I_{X;Y}(S)}{H_{X,Y}(S)}\right\}. 
 \end{aligned}
\end{equation}
This definition mimics the one for the total variation distance: 
\begin{equation}
  d_\text{V}(X,Y) = \inf_{{\mathcal{C}}(P,Q)} \big\{{\mathbb{P}}(X\neq Y)\big\} 
\end{equation}
where the infimum is taken over all joint distributions of the random vector $(X,Y)$ 
with marginals $P$ and $Q$. 
\par Let $\Gamma^{(1)}=\big\{(p_i)_{i\in\mathbb{N}}\,:\,p_i\geq0\,,\,\sum_i p_i = 1\big\}$. 
We have the following. 
\begin{proposition}
  $\underline{\Delta}$ and $\underline{\Delta}'$ are pseudometrics on $\Gamma^{(1)}$. 
\end{proposition}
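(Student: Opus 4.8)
The plan is to verify the three defining properties of a pseudometric for both $\underline{\Delta}$ and $\underline{\Delta}'$: nonnegativity together with vanishing on the diagonal, symmetry, and the triangle inequality. The first two follow almost immediately from the corresponding properties of $\Delta$ and $\Delta'$, while the triangle inequality is the crux and I would obtain it by a gluing argument in the spirit of the one used for the total variation distance $d_\text{V}$ in optimal transport.

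Nonnegativity is clear: $\Delta(S) = H(X|Y) + H(Y|X) \ge 0$ by \eqref{ineqHx}, and $\Delta'(S) = 1 - I_{X;Y}(S)/H_{X,Y}(S) \ge 0$ because $I(X;Y) \le H(X,Y)$ follows from \eqref{ineqI} and \eqref{ineqH}; hence the infima over $\mathcal{C}(P,Q)$ are nonnegative. For the diagonal, when $P = Q$ the coupling placing all mass on the diagonal, $s_{i,j} = p_i$ for $i=j$ and $0$ otherwise, makes $X = Y$ almost surely, so that $H(X|Y) = H(Y|X) = 0$ and $I(X;Y) = H(X,Y)$; thus both $\Delta$ and $\Delta'$ vanish on it and $\underline{\Delta}(P,P) = \underline{\Delta}'(P,P) = 0$ (with the convention $0/0 = 0$ when $P$ is a point mass). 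Symmetry holds because transposition $S \mapsto S^{\top}$ is a bijection between $\mathcal{C}(P,Q)$ and $\mathcal{C}(Q,P)$ that leaves both $\Delta$ and $\Delta'$ invariant, being symmetric in $X$ and $Y$, so the two infima coincide.

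The triangle inequality is where the work lies. The starting point is that, for any \emph{single} joint distribution of a triple $(X,Y,Z)$, both $\Delta$ and $\Delta'$ already satisfy the triangle inequality \cite{csiszar}; for $\Delta$ this is elementary, since $H(X|Z) \le H(X,Y|Z) = H(Y|Z) + H(X|Y,Z) \le H(Y|Z) + H(X|Y)$ and symmetrically for $H(Z|X)$, whence $\Delta(X,Z) \le \Delta(X,Y) + \Delta(Y,Z)$. To pass from this to $\underline{\Delta}$, fix $P,Q,R$ and choose, for each $\varepsilon>0$, near-optimal couplings $S_{XY} \in \mathcal{C}(P,Q)$ and $S_{YZ} \in \mathcal{C}(Q,R)$ with $\Delta(S_{XY}) \le \underline{\Delta}(P,Q) + \varepsilon$ and $\Delta(S_{YZ}) \le \underline{\Delta}(Q,R) + \varepsilon$. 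I would glue them along the shared marginal $Q$ by declaring $X$ and $Z$ conditionally independent given $Y$, i.e. defining a law on $(X,Y,Z)$ proportional to $(S_{XY})_{i,j}(S_{YZ})_{j,k}/q_j$. A direct summation shows its $(X,Y)$- and $(Y,Z)$-marginals are exactly $S_{XY}$ and $S_{YZ}$, while its $(X,Z)$-marginal is some $T \in \mathcal{C}(P,R)$. Applying the fixed-distribution triangle inequality to this triple then gives $\underline{\Delta}(P,R) \le \Delta(T) \le \Delta(S_{XY}) + \Delta(S_{YZ}) \le \underline{\Delta}(P,Q) + \underline{\Delta}(Q,R) + 2\varepsilon$, and letting $\varepsilon \to 0$ yields the claim. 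The identical argument, substituting the triangle inequality for $\Delta'$ from \cite{csiszar}, handles $\underline{\Delta}'$.

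The main obstacle is this triangle inequality, and within it the one genuinely non-formal ingredient is the gluing step: one must manufacture, from two couplings sharing the marginal $Q$, a single joint law of $(X,Y,Z)$ that restricts to both of them, so that the per-distribution inequality can be invoked; the conditional-independence construction above does exactly this and is the analogue of the gluing lemma for transport plans. Two smaller technical points deserve care: the $0/0$ convention needed to make $\Delta'$ and $\underline{\Delta}'(P,P)$ well defined at point masses, and the fact that over countable alphabets the infimum need not be attained, which is why I work with $\varepsilon$-optimal couplings rather than exact minimizers. The deeper external input is the triangle inequality for the normalized quantity $\Delta'$ on a fixed joint distribution, which we take from \cite{csiszar}; granting that, the normalized case is no harder than the unnormalized one.
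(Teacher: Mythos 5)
Your proof is correct, and in fact there is nothing in the paper to compare it against: the authors explicitly omit the proof of this proposition (``not difficult \ldots not essential for our current aims''), so your writeup supplies the missing argument. The route you take is the natural and almost certainly intended one, given that the paper models $\underline{\Delta}$ and $\underline{\Delta}'$ on the coupling definition of the total variation distance $d_\text{V}$: quote the per-distribution pseudometric properties of $\Delta$ and $\Delta'$ from \cite{csiszar} (your elementary derivation $H(X|Z)\le H(Y|Z)+H(X|Y,Z)\le H(Y|Z)+H(X|Y)$, plus its mirror image, correctly proves the $\Delta$ case from scratch), then transfer the triangle inequality to the infimal quantities via the gluing lemma. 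Your gluing is sound: when $q_j=0$ all entries $(S_{XY})_{i,j}$ and $(S_{YZ})_{j,k}$ vanish, so the marginal computations go through, and since $\Delta(T)$ depends only on the $(X,Z)$-marginal of the glued law, the chain $\underline{\Delta}(P,R)\le\Delta(T)\le\Delta(S_{XY})+\Delta(S_{YZ})$ is valid; working with $\varepsilon$-optimal couplings correctly handles the possible non-attainment of the infimum over the countable alphabets of $\Gamma^{(1)}$.

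Two small corrections. First, the glued law $(S_{XY})_{i,j}(S_{YZ})_{j,k}/q_j$ is not merely ``proportional to'' the desired distribution --- it already sums to $1$, so no normalizing constant is needed, and saying ``proportional'' wrongly suggests otherwise. Second, your $0/0$ convention points the wrong way: if you literally set $I_{X;Y}(S)/H_{X,Y}(S):=0$ when both vanish, then at a point mass $P$ you get $\Delta'=1-0=1$ and hence $\underline{\Delta}'(P,P)=1\neq 0$. The convention must be applied to the equivalent form $\Delta'(S)=\Delta(S)/H_{X,Y}(S)$ with $0/0:=0$, i.e., $I/H:=1$ when $H(X,Y)=0$. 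With that repaired, the argument is complete; for full rigor on $\Gamma^{(1)}$ you might also remark that entropies can be infinite there, so $\underline{\Delta}$ is an extended pseudometric, but all your inequalities involve sums of nonnegative terms and remain valid in the extended reals.
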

The proof of this proposition is not difficult but we omit it here since it is 
not essential for our current aims. We can now prove one more intractability 
result. 
\begin{theorem}
  Given rational $P$ and $Q$, determining whether $\underline{\Delta}(P,Q) = H(P) - H(Q)$ is NP-hard. 
\end{theorem}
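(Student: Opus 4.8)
The plan is to reduce from \textsc{Subset sum}, reusing the very instance transformation employed in the proof of Theorem \ref{minentropy}. First I would rewrite the target quantity so that its connection to mutual information becomes explicit. Using identities \eqref{identity}, over any coupling $S\in\mathcal{C}(P,Q)$ the marginals are fixed, so $H(X)=H(P)$ and $H(Y)=H(Q)$, and hence $H_{X,Y}(S)-I_{X;Y}(S)=H(X|Y)+H(Y|X)=H(P)+H(Q)-2I_{X;Y}(S)$. Taking the infimum over the polytope gives
\[
  \underline{\Delta}(P,Q) = H(P) + H(Q) - 2\sup_{S\in\mathcal{C}(P,Q)} I_{X;Y}(S),
\]
so that minimizing the variation of information is exactly maximizing mutual information, the equivalence already noted in Section \ref{entropytp}.

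Next I would characterize when the target equality holds. From \eqref{ineqI} we have $I_{X;Y}(S)\leq\min\{H(P),H(Q)\}$ for every $S$, whence $\underline{\Delta}(P,Q)\geq|H(P)-H(Q)|\geq H(P)-H(Q)$. The key claim is that $\underline{\Delta}(P,Q)=H(P)-H(Q)$ if and only if there is an $S\in\mathcal{C}(P,Q)$ in which $Y$ is a deterministic function of $X$, i.e. $S$ has at most one nonzero entry in every row. For the backward direction, such an $S$ has $H(Y|X)=0$, so $I_{X;Y}(S)=H(Q)$; together with the bound $I\leq\min\{H(P),H(Q)\}$ this forces $H(Q)\leq H(P)$ and $\sup_S I=H(Q)$, giving precisely $\underline{\Delta}(P,Q)=H(P)-H(Q)$. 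For the forward direction, the equality forces $\sup_S I_{X;Y}(S)=H(Q)$; since $I_{X;Y}$ is continuous and $\mathcal{C}(P,Q)$ is compact, this supremum is attained at some $S^{\ast}$ with $I_{X;Y}(S^{\ast})=H(Q)=H(Y)$, i.e. $H(Y|X)=0$, which by the equality condition for \eqref{ineqI} means $Y$ depends deterministically on $X$. Notice that the backward direction delivers $H(P)\geq H(Q)$ for free, so no separate hypothesis on the marginals is needed.

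The reduction is then immediate. Given a \textsc{Subset sum} instance $d_1,\ldots,d_n,s$, I would form exactly the marginals of Theorem \ref{minentropy}: set $D=\sum_i d_i$, $p_i=d_i/D$, and $Q=(q,1-q)$ with $q=s/D$. As argued there, a matrix in $\mathcal{C}(P,Q)$ with at most one nonzero entry per row exists if and only if some subset of the $d_i$ sums to $s$; by the characterization above this is exactly the condition $\underline{\Delta}(P,Q)=H(P)-H(Q)$, so any solver for the latter decides \textsc{Subset sum}. Since the theorem asserts only NP-hardness, no membership argument is required, which is fortunate: verifying the real-number equality $\underline{\Delta}(P,Q)=H(P)-H(Q)$ entails comparing sums of the form $a\log a$ and runs into precisely the certificate-checking difficulties flagged after Theorem \ref{minentropy} (cf. the \textsc{Sqrt sum} discussion).

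I expect the main obstacle to be the equality analysis rather than the reduction. The delicate points are that the infimum defining $\underline{\Delta}$ must be shown to be attained, so that $\sup_S I=H(Q)$ yields a genuinely deterministic coupling and not merely a limiting one, and that one must confirm the forward direction really produces row-determinism. The borderline case $H(P)=H(Q)$ deserves a brief check, since there a coupling with $X$ deterministic in $Y$ (one nonzero entry per column) would also achieve the bound; but with $Q$ having only two symbols and all $p_i>0$ when $n\geq 3$, such a column-deterministic coupling would force at most two nonzero entries against $n$ positive row sums, hence cannot exist. Thus the characterization in terms of row-determinism remains exact, and the cases $n\leq 2$ are trivially decidable, so the reduction goes through.
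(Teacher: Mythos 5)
Your proposal is correct and is essentially the paper's own argument with Theorem \ref{minentropy} unfolded inline: the paper simply rewrites $\underline{\Delta}(P,Q) = 2\inf_{S\in\mathcal{C}(P,Q)} H_{X,Y}(S) - H(P) - H(Q)$, notes that the equality $\underline{\Delta}(P,Q)=H(P)-H(Q)$ is then precisely the question $\inf_{S} H_{X,Y}(S)=H(P)$, i.e., the \textsc{Entropy minimization} problem, and cites Theorem \ref{minentropy}, whereas you re-run the same \textsc{Subset sum} transformation through the equivalent formulation $\underline{\Delta}(P,Q)=H(P)+H(Q)-2\sup_{S} I_{X;Y}(S)$. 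Your extra care about attainment of the infimum is a legitimate detail the paper's ``follows directly'' glosses over, though your borderline check for $H(P)=H(Q)$ is superfluous: $I_{X;Y}(S^{\ast})=H(Q)=H(Y)$ by itself forces $H(Y|X)=0$, hence row-determinism, even when the maximizer happens also to be column-deterministic.
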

\begin{proof}
  Note that 
  \begin{equation}
   \begin{aligned}
    \underline{\Delta}(P,Q) &= \inf_{S\in{\mathcal{C}}(P,Q)} \big\{H_{X,Y}(S) - I_{X;Y}(S)\big\}  \\ 
                       &= 2\inf_{S\in{\mathcal{C}}(P,Q)} \big\{H_{X,Y}(S)\big\} - H(P) - H(Q).
   \end{aligned}
  \end{equation}
  Now the claim follows directly from Theorem \ref{minentropy}. 
\end{proof}
\section{One marginal fixed}
\label{onemarginal}
In this section we address similar problems as before, only now we fix only one 
of the marginal distributions, say $P=(p_1,\ldots,p_n)$. If the cardinality of 
the alphabet of the other random variable $Y$ is not specified, then the problems 
are trivial. Namely, one takes $Q=P$ and for $S=\text{diag}(P)$ (two-dimensional 
distribution with masses $p_i$ on the diagonal and zeros elsewhere) one has 
$H_{X,Y}(S)=I_{X;Y}(S)=H(P)$, and hence $S$ is optimal. So assume that the 
cardinality of the other alphabet is bounded to $m$. Denote the set of all 
distributions with marginal distribution of $X$ fixed to $P$ and the cardinality 
of the alphabet of $Y$ fixed to $m$, by ${\mathcal C}(P,m)$. We have 
\begin{equation}
  {\mathcal C}(P,m) = \bigcup_{Q\in\Gamma_{m}^{(1)}} {\mathcal C}(P,Q). 
\end{equation}
\par Minimization of the joint entropy $H(X,Y)$ over such polytopes is trivial. The 
reason is that $H(X,Y)\geq H(P)$ with equality iff $Y$ deterministically depends 
on $X$, and so the solution is \emph{any} joint distribution having at most one 
nonzero entry in each row. Since $H(X)$ is fixed, this also minimizes the 
conditional entropy $H(Y|X)$. 
The other two optimization problems considered so far, minimization of 
$H(X|Y)$ and maximization of $I(X;Y)$, are still equivalent because 
$I(X;Y)=H(X)-H(X|Y)$, but they turn out to be much harder. Therefore, in the 
following we shall consider only the maximization of $I(X;Y)$. 
\par When one marginal is fixed, choosing the optimal joint distribution amounts 
to choosing the optimal conditional distribution $p(y|x)$. Mutual information 
$I(X;Y)$ is known to be convex in the conditional distribution \cite{cover} 
(and hence, $H(X|Y)$ is concave in $p(y|x)$, for fixed $p(x)$) and so this 
is again a convex maximization problem. This conditional distribution can be 
thought of as a discrete memoryless communication channel with $n$ input 
symbols and $m$ output symbols, and hence we name the corresponding 
computational problem \textsc{Optimal channel}. 
 \displayproblem
   {Optimal channel}
   {Positive rational numbers $p_1,\ldots,p_n$ with $\sum_{i=1}^{n} p_i = 1$, 
    and an integer $m$.}
   {Is there a channel $C\in{\mathcal C}(P,m)$ with mutual information $I_{X;Y}(C)\geq\log m$ ?}
Note that the above inequality must in fact be an equality because over 
${\mathcal C}(P,m)$: 
\begin{equation}
\label{upper}
  I(X;Y) \leq \min\big\{H(P),\log m\big\} 
\end{equation}
which follows from \eqref{ineqI} and the fact that $H(Y)\leq\log m$. The above 
problem is a suitable restriction of a more general problem of finding a maximizing 
distribution, as we did with {\sc Entropy minimization}. 
\subsection{The proof of NP-hardness}
We describe next the well-known \textsc{Partition} (or \textsc{Number partitioning}) 
problem \cite{garey}. 
 \displayproblem
   {Partition}
   {Positive integers $d_1,\ldots,d_n$.}
   {Is there a partition of $\{d_1,\ldots,d_n\}$ into two subsets with equal sums?}
This is clearly a special case of the \textsc{Subset sum} problem. It can be solved in 
pseudo-polynomial time by dynamic programming methods \cite{garey}. But the following 
closely related problem is much harder. 
 \displayproblem
   {3-Partition}
   {Nonnegative integers $d_1,\ldots,d_{3m}$ and $k$ with $k/4<d_j<k/2$ and $\sum_{j} d_j = mk$.}
   {Is there a partition of $\{1,\ldots,3m\}$ into $m$ subsets $J_1,\ldots,J_m$ 
    (disjoint and covering $\{1,\ldots,3m\}$) such that $\sum_{j\in J_r} d_j$ are 
     all equal? (The sums are necessarily $k$ and every $J_i$ has $3$ elements.)}
This problem is NP-complete in the strong sense \cite{garey}, i.e., no 
pseudo-polynomial time algorithm for it exists unless P=NP. 
\par The following theorem will establish that, given an information source, 
determining the best channel (in the sense of having the largest mutual 
information) is NP-hard. 
\begin{theorem}
 \label{channel}
  \textsc{Optimal channel} is NP-complete. 
\end{theorem}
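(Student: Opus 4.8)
The plan is to reduce from \textsc{3-Partition}, which simultaneously yields hardness \emph{in the strong sense} and matches the combinatorial structure of \textsc{Optimal channel} exactly. The starting point is the remark made after \eqref{upper}: any \textsc{Yes}-instance must in fact achieve $I(X;Y)=\log m$, and I would first translate this equality into a purely combinatorial condition. Writing $I(X;Y)=H(Y)-H(Y|X)\le H(Y)\le\log m$, equality with $\log m$ forces two things at once. From $I(X;Y)=H(Y)$ we get $H(Y|X)=0$, so that $Y$ is a deterministic function of $X$ (every row of the joint distribution has a single nonzero entry); and from $H(Y)=\log m$ we get that $Y$ is uniform on its $m$ symbols, i.e.\ every column sum equals $1/m$. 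Since every $p_i$ is strictly positive, both conditions are forced, so a \textsc{Yes}-instance exists if and only if the inputs can be deterministically routed to the $m$ outputs with each output receiving total mass exactly $1/m$ — equivalently, if and only if the masses $p_1,\dots,p_n$ admit a partition into $m$ blocks of equal sum $1/m$.

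Given this reformulation the reduction is immediate. Starting from a \textsc{3-Partition} instance $d_1,\dots,d_{3m},k$ with $\sum_j d_j=mk$, I would set $n=3m$, $D=\sum_j d_j=mk$, and $p_i=d_i/D$, and take the output alphabet size to be the same $m$. Then a partition of the $p_i$ into $m$ blocks each summing to $1/m$ corresponds exactly to a partition of the $d_i$ into $m$ blocks each summing to $D/m=k$, which is precisely a solution of the \textsc{3-Partition} instance. Hence the constructed \textsc{Optimal channel} instance is a \textsc{Yes}-instance if and only if the original one is. Because the largest integer produced, the common denominator $D=mk$, is polynomially bounded in the \textsc{3-Partition} data, the reduction is a pseudo-polynomial transformation, and so the strong NP-completeness of \textsc{3-Partition} transfers to give strong NP-hardness of \textsc{Optimal channel}.

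For membership in NP I would again invoke the certificate characterization, exactly as in the proof of Theorem \ref{minentropy}. The certificate is the optimal channel, which by the analysis above may be taken to be deterministic, i.e.\ an assignment $f\colon\{1,\dots,n\}\to\{1,\dots,m\}$; this is manifestly succinct. Crucially, verifying it requires no logarithms: one only checks that for each output $j$ the induced column sum $\sum_{i\,:\,f(i)=j}p_i$ equals $1/m$, equivalently that $\sum_{i\,:\,f(i)=j}d_i=k$, an exact integer computation done in polynomial time.

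The step I expect to require the most care is the first one: arguing that the \emph{only} way to attain $I(X;Y)=\log m$ is the deterministic, uniform-output channel, so that the continuous optimization collapses to the discrete partition question and the \textsc{Yes}-set is governed by a clean combinatorial certificate. This is what lets us sidestep the subtle difficulty, flagged after Theorem \ref{minentropy}, of having to verify inequalities involving irrational quantities of the form $a\log a$. A secondary point worth making explicit is \emph{why} \textsc{3-Partition} is the right source problem rather than \textsc{Partition} or \textsc{Subset sum}: it is the partition-into-$m$-equal-blocks structure that matches an $m$-symbol output, and its strong NP-completeness is exactly what upgrades the conclusion from weak to strong NP-hardness.
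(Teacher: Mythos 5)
Your proposal is correct and follows essentially the same route as the paper: a reduction from \textsc{3-Partition} via $p_i=d_i/D$, with the equivalence established through the equality conditions forcing a deterministic channel with uniform output, and NP membership certified by the optimal matrix itself. You merely make explicit some details the paper leaves implicit (the collapse of $I(X;Y)=\log m$ into the two combinatorial conditions, the exact rational verification of the certificate, and the pseudo-polynomiality of the transformation, which the paper notes in a remark following the proof).
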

\begin{proof}
  We prove the claim by reducing 3-\textsc{Partition} to \textsc{Optimal channel}. 
  Let there be given an instance of the 3-\textsc{Partition} problem as described 
  above, and let $p_i=d_i/D$ where $D=\sum d_i$. Deciding whether there exists a 
  partition with described properties is clearly equivalent to deciding whether 
  there is a matrix $C\in\mathcal{C}(P,\,m)$ with the other marginal $Q$ being 
  uniform and $C$ having at most one nonzero entry in every row (i.e., $Y$ 
  deterministically depending on $X$). This on the other hand happens if and only if 
  there is a matrix $C\in\mathcal{C}(P,\,m)$ with mutual information 
  equal to $H(Q)=\log m$. Therefore, solving the \textsc{Optimal channel} problem 
  with instance $(p_i)$ as above will solve the 3-\textsc{Partition} problem. 
  This shows the NP-hardness of \textsc{Optimal channel}. It is left to prove that 
  it belongs to NP. The reasoning here is completely analogous to the one in the 
  proof of Theorem \ref{minentropy}, namely, the certificate is the optimal 
  distribution/matrix itself. 
\end{proof}
\par The problem remains NP-complete even over ${\mathcal C}(P,\,2)$, i.e., 
when the cardinality of the channel output is fixed in advance to $2$. In that 
case the problem is equivalent to the \textsc{Partition} problem. 
\par It is easy to see that the transformation in the proof of Theorem 
\ref{channel} is in fact \emph{pseudo-polynomial} \cite{garey} which implies 
that \textsc{Optimal channel} is strongly NP-complete and, unless P=NP, has no 
pseudo-polynomial time algorithm. 
\section*{Acknowledgment}
  The authors would like to acknowledge the financial support of the Ministry 
  of Science and Technological Development of the Republic of Serbia 
  (grants No. TR32040 and III44003). 
\vfill

\end{document}